\documentclass[12pt]{huber_article}
\usepackage[utf8]{inputenc}

\usepackage{amsthm,amsmath,amsfonts}
\usepackage{booktabs}
\usepackage{comment}
\usepackage[margin=1.25in,footskip=0.25in]{geometry}

\usepackage{tikz}

\newcommand{\mean}{\mathbb{E}}
\newcommand{\prob}{\mathbb{P}}

\newcommand{\prar}{{\sffamily{PRAR}}}
\newcommand{\ar}{{\sffamily{AR}}}
\newcommand{\mcmc}{{\sffamily{MCMC}}}

\newcommand{\berndist}{{\textsf{Bern}}}

\newcommand{\gk}{{\arabic{line})}\stepcounter{line}}
\newcounter{line}

\newtheorem{lemma}{Lemma}

\begin{document}

\title{Partially Recursive Acceptance Rejection}
\author{\noindent Mark Huber\\ Claremont McKenna College \\ {\tt mhuber@cmc.edu}}

\maketitle

\begin{abstract}
  Generating random variates from high-dimensional distributions is often
  done approximately using Markov chain Monte Carlo.  In certain cases,
  perfect simulation algorithms exist that allow one to draw exactly from
  the stationary distribution, but most require $O(n \ln(n))$ time, where
  $n$ measures the size of the input.  In this work a new protocol for 
  creating perfect simulation algorithms that runs in $O(n)$ time for a wider
  range of parameters on several models (such as Strauss, Ising, and random cluster)
  than was known previously.  This
  work represents an extension of the popping algorithms due to Wilson.
\end{abstract}

\thispagestyle{empty}

\setcounter{page}{1}

\section{Introduction}
\label{SEc:introduction}

Partially recursive acceptance rejection (\prar{}) is a new protocol for creating algorithms for exactly generating random variates from high dimensional distributions.  The method is simple to implement efficiently, and results in algorithms that can be proved to have an expected running time that is linear in problem size over a wider range of parameters than known previously for several problems of interest.

Consider a distribution defined either by an unnormalized weight function $w(x)$ for discrete spaces, or an unnormalized density function $f(x)$ for continuous spaces.  In the past, techniques for perfect simulation from these distributions have been very different depending on whether the spaces were continuous or discrete.  \prar{} operates the same way in both situations.  

The problems where \prar{} is useful often use approximate sampling via Markov chain Monte Carlo (\mcmc{}).  However, these are not true algorithms unless the mixing time of the Markov chain being used can be somehow bounded.  One such technique for bounding the mixing time of a Markov chain is Dobrushin uniqueness~\cite{dobrushin1968} (see also the work on path coupling of Bubley and Dyer~\cite{bubleyd1997c}).  \prar{} requires a condition very similar to Dobrushin uniqueness, the difference being that even if Dobrushin uniqueness cannot be shown mathematically, it can be verified algorithmically.  

One can prove (up to an arbitrarily small chance of error) that a Markov chain is slowly mixing 
through computer experiment, however, there is no generally effective way to show that a Markov chain is rapidly mixing.  However, with perfect simulation protocols like \prar{}, it is possible through computer experiment to verify that the resulting output comes exactly from the target distribution, thereby making them more useful in practice for certain problems than the Markov chain approach.

Our concern here is with fast algorithms that use an expected number of random choices that is linear in the size of the problem.  There are currently two protocols for generating samples in expected linear time from these types of problems.
The first method is the {\em Clan of Ancestors} method of~\cite{ferrarifg2002}.  This method is an extension
of the Coupling from the Past~\cite{proppw1996} protocol or Propp and Wilson.  It looks backwards in time at
the events that could affect the state in the present.  The set of backwards events is called the Clan of 
Ancestors, which gives the name to the method.  This approach has rarely been
implemented (see~\cite{huber2012d} for one such implementation) because unless special care is taken it can
be very computationally expensive to keep track of how clans arising from different dimensions interact.  
One of the advantages of \prar{} is that such interactions cannot occur with this approach, making the 
implementation much easier.

The second linear time method is the {\em Randomness Recycler} method of~\cite{huber2000a,huber2000b}.  Like \prar{} (and
unlike Clan of Ancestors) it is straightforward to implement, however, the range of parameters for which 
the algorithm is provably linear time is in all known
cases more restricted than with \prar{}.

The \prar{} protocol is illustrated with the following applications.
\begin{itemize}
  \item
  For independent sets of a graph with $n$ nodes, maximum degree $\Delta$, and parameter $\lambda$, the method allows $O(n)$ expected sampling time when $\lambda < 1.13/(\Delta - 2)$, beating the previously best known linear time algorithm which required $\lambda< 1/(\Delta - 1)$.
  \item
  For the continuous state space autonormal model on a graph with $n$ nodes and parameter $\beta$, the method allows $O(n)$ expected sampling time when $\beta < (\Delta - 1)\exp(-\beta)$, beating the previously best known linear time algorithm which required $\beta < \Delta\exp(-\beta)$.
  \item
  For the random cluster model on a graph with $n$ nodes and maximum degree $\Delta$, and parameters $p$ and $q$, the method allows $O(n)$ expected sampling time when $p < 1/\Delta$.
\end{itemize}

Of course, if the expected running time is allowed to be $O(n \ln (n))$ rather than linear, other
methods with wider parameter ranges exist.
For instance, in~\cite{huber2000a} it was shown using bounding chains and coupling from the past~\cite{proppw1996} how to sample
independent sets in $O(n \ln(n))$ time when $\lambda<2/(\Delta - 2)$.  But for $O(n)$ algorithms,
\prar{} appears to have the widest bounds.

The rest of the paper is organized as follows.  The next section presents the applications, followed by a section describing the theory of the protocol.  Then Section~\ref{SEC:apply} applies the protocol to the different applications, proving the running time results listed above. Section~\ref{SEC:popping} then shows how this
method can be viewed as a generalization of the popping method of Wilson for 
sampling directed rooting spanning trees of a graph.

\section{Applications}

Each of the applications will be described as an unnormalized density $w(x)$ with respect to a base finite measure $\mu$ from which it is easy to sample from.

\subsection{Independent Sets of a Graph}

Consider coloring the nodes of a graph $(V,E)$ with either 0 or 1, so the state space is $\Omega = \{0,1\}^V$.
The underlying measure we use has a parameter $\lambda \geq 0$ that controls the average number of nodes in a randomly drawn coloring:
\begin{equation}
\label{EQN:indsetmu}
\mu(\{x\}) = \lambda^{\sum_i x(i)}.
\end{equation}
Say that $X$ is a Bernoulli random variable with
mean $p$ (write $X \sim \berndist(p)$) if 
$\prob(X = 1) = p$ and $\prob(X = 0) = 1-p$.  To sample from $\mu$, 
generate $X(v) \sim \berndist(\lambda/(1+\lambda))$ independently for 
all $v \in V$.

We use the convention $0^0 = 1$ so that when $\lambda = 0$ the only state with positive measure is the all 0 coloring, while as $\lambda$ grows the measure favors labelings with more 1's.

An {\em independent set} of a graph is a labeling such that no two adjacent nodes receive a 1.  Hence the density of interest is
\[
w(x) = \prod_{\{i,j\} \in E} (1 - x(i)x(j)).
\]
Note that for all $x \in \Omega$, $w(x) \in \{0,1\}$ as well.

\subsection{Strauss process}

The Strauss process~\cite{strauss1975} extends the 
independent set model by allowing two adjacent nodes
to both be labeled 1, but assigns a penalty factor of
$\alpha \in [0,1]$ when this occurs.  The underlying
measure $\mu$ still is defined by~\eqref{EQN:indsetmu},
now the density becomes
\[
w(x) = \prod_{\{i,j\}\in E} \alpha^{x(i)x(j)}.
\]
With the convention that $0^0 = 1$, then 
when $\alpha = 0$ this is just the independent 
set density.

\subsection{Autonormal}

The autonormal model~\cite{besag1974} is a continuous extension of the Ising model~\cite{ising1925} used for modeling images and other spatial experiments.  For a graph $G = (V,E)$, the state space is 
$\Omega = [0,1]^V$, the underlying measure $\mu$ has density
\begin{equation}
\label{EQN:autonormal}
f(x) = \exp\left(-J \sum_{i} (x(i) - y(i))^2 \right)
\end{equation}
with respect to Lebesgue measure.  Here
$J \geq 0$ 
and $y(i)$ are known parameters.  To sample from
$\mu$, each node label $X(i)$ 
is independently  
normal with mean $y(i)$, variance
$(2J)^{-1}$, and conditioned to lie in $[0,1]$.
The density is
\[
w(x) = \exp\left(-\sum_{\{i,j\} \in E} \beta \cdot (x(i) - x(j)^2\right),
\]
where $\beta$ is another constant.

\subsection{Random Cluster}

The random cluster model~\cite{fortuink1972} is another way of viewing the Potts model~\cite{potts52}, which is itself a different extension of the Ising model~\cite{ising1925}.  Given a graph $G = (V,E)$ the state space is now a coloring of edges with 0 and 1.  Given $x \in \Omega = \{0,1\}^E$, and parameter $p$, the underlying measure $\mu$ is
\[
\mu(\{x\}) = p^{\sum_{e \in E} x(e)} (1 - p)^{\sum_{e \in E} (1 - x(e))}.
\]
Again this measure $\mu$ is 
easy to sample from: draw (for all $e \in E$)  
$X(e) \sim \berndist(p)$ independently.

A second parameter $q$ controls the density $w$ through the use of $c(x)$, which is
the number of connected components in the graph using only the edges with $x(e) = 1$.
\begin{equation}
\label{EQN:randomcluster}
w(x) = q^{c(x)}.
\end{equation}
When $q = 2$, a sample from the random cluster model can be transformed into a sample from the Ising model in linear time~\cite{fortuink1972}, and for $q$ an integer greater than 2, it can be transformed in linear time into a sample from the Potts model.

\begin{figure}[ht]
\begin{center}
\begin{tikzpicture}
  \begin{scope}[scale=1]
  \draw[fill=black,thick] (0,0) -- (0,1);
  \draw[dotted] (0,0) grid (1,1);
  \foreach \x in {0,1}
    \foreach \y in {0,1}
      \draw[fill=white,thick] (\x,\y) circle(5pt);
  \end{scope}
\end{tikzpicture}
\end{center}
\caption{An illustration of the random cluster model on a 2 by 2 square lattice.  
Here the state has measure $p(1-p)^3$ since one edge is
included and three edges are not.  The density of this state is $q^3$ since the edges of the state
break the nodes into 3 connected components.  With respect to counting measure, this state has
density $p(1-p)^3 q^3$.}
\end{figure}
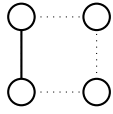

\section{Theory}

The well-known acceptance rejection method requires two properties to hold:
\begin{enumerate}
  \item  
  It is easy to sample from the underlying measure $\mu$.
  \item
  The target density is bounded by a known constant $M$.
\end{enumerate}
When these hold (and they do for each of the applications of the previous section), the acceptance rejection (\ar{})
technique is as follows.

\begin{center}
\setcounter{line}{1}
\begin{tabular}{rl}
\toprule 
\multicolumn{2}{l}{{\tt AR} 
  \quad {\em Output: } $X$} \\
\midrule 
\gk & Randomly draw $X \leftarrow \mu$ \\
\gk & Draw $U$ uniformly from $[0,1]$ \\
\gk & If $U < w(X)/M$ output $X$ and quit \\
\gk & Else \\
\gk & \hspace*{1em} $X \leftarrow {\tt AR}$, output $X$ and quit \\
\bottomrule
\end{tabular}
\end{center}
It is well-known~\cite{fishman1994,huber2015b} that the output is a draw
from density $w(x)$ with respect to measure $\mu$.

Now suppose that $x \in C^D$ for color set $C$ and dimension set $D$ (as it is for all of our applications.)
Let $(D_1,D_2)$ be a partition of the dimensions $D$.  Note that for each of our applications, the product
nature of $w(x)$ means that we can write
\[
w(x) = w_1(x(D_1)) w_2(x(D_2) w_{12}(x(D_1 \cup D_2)),
\]
For instance, for the independent sets model, if $(V_1,V_2)$ is a partition of the nodes $V$, then
\[
w(x) = \left[\prod_{\{i,j\} \in V_1} (1-x(i)x(j))\right]\left[\prod_{\{i,j\} \in V_2} (1-x(i)x(j))\right]
  \left[\prod_{i \in V_1,j \in V_2} (1-x(i)x(j))\right]
\]
and $w(x)$ has been nicely factored into $w_1$, $w_2$, and $w_{12}$.

When for $D_1$ and $D_2$ there is an $M$ such that $w_{12}(x) \leq M$,
fully recursive \ar{} is as follows.
\begin{center}
\setcounter{line}{1}
\begin{tabular}{rl}
\toprule 
\multicolumn{2}{l}{{\tt Fully\_Recusrive\_AR} \quad {\em Input: } $S \subseteq D$ 
  \quad {\em Output: } $X(S)$} \\
\midrule
\gk & Parition $S$ into $D_1$ and $D_2$ \\
\gk & $X(D_1) \leftarrow {\tt Fully\_Recursive\_AR}(D_1)$, $X(D_2) \leftarrow {\tt Fully\_Recursive\_AR}(D_2)$\\
\gk & Draw $U$ uniformly from $[0,1]$ \\
\gk & If $U < w_{12}(X)/M$ output $X$ and quit \\
\gk & Else \\
\gk & \hspace*{1em} $X \leftarrow {\tt Fully\_Recursive\_AR}(S)$, output $X(S)$ and quit \\
\bottomrule
\end{tabular}
\end{center}


\begin{lemma}
  Fully recursive \ar{} generates output $X$ according to $w$ with respect to $\mu$.
\end{lemma}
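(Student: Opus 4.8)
The plan is to prove the statement by strong induction on the number of dimensions $|S|$, after first strengthening it to the claim that for every $S \subseteq D$ the call ${\tt Fully\_Recursive\_AR}(S)$ returns $X(S)$ distributed according to the density $w_S$ with respect to $\mu_S$, where $w_S$ collects exactly those factors of $w$ whose dimensions all lie in $S$, and $\mu_S$ is the product measure $\mu$ restricted to the coordinates in $S$. The lemma is then the special case $S = D$. The base case is $|S| = 1$: here there are no internal interactions, so $w_S \equiv 1$, and the routine reduces to drawing $X(S) \leftarrow \mu_S$ directly, which trivially has density $w_S$ with respect to $\mu_S$.

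For the inductive step I would fix $S$ with $|S| \geq 2$ and assume the claim for all strictly smaller dimension sets. The routine partitions $S$ into $D_1$ and $D_2$ and calls itself on each with fresh, independent randomness. By the induction hypothesis, $X(D_1)$ has density $w_1$ with respect to $\mu_{D_1}$ and $X(D_2)$ has density $w_2$ with respect to $\mu_{D_2}$, so independence gives the pair $(X(D_1), X(D_2))$ a joint density proportional to $w_1(x(D_1))\, w_2(x(D_2))$ with respect to $\mu_{D_1} \times \mu_{D_2} = \mu_S$. Here I am using that the factorization $w_S = w_1 w_2 w_{12}$ identifies $w_1$ and $w_2$ with precisely the internal densities $w_{D_1}$ and $w_{D_2}$ targeted by the sub-calls. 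Let $\nu$ be the measure on $C^S$ with density $w_1 w_2$ relative to $\mu_S$; together the two recursive calls produce a sample from the normalization of $\nu$.

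The remaining lines are then exactly the acceptance--rejection template with underlying measure $\nu$, target density $w_{12}$, and bound $M$, which dominates $w_{12}$ by hypothesis: a proposal from $\nu$ is accepted with probability $w_{12}(X)/M \leq 1$, and on rejection the whole procedure restarts independently. Invoking the correctness of \ar{} already recorded above, the accepted output has density $w_{12}$ with respect to $\nu$, which is $w_1 w_2 w_{12} = w_S$ with respect to $\mu_S$, closing the induction.

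The main thing to be careful about will be the bookkeeping in the inductive step rather than any new probabilistic content: one must verify that the density delivered by the recursion on $D_i$ is exactly the factor $w_i$ appearing in the factorization of $w_S$, so that the product $w_1 w_2 w_{12}$ reconstitutes $w_S$ and the hypotheses match across recursion levels. Once the factorization is seen to compose consistently in this way, each level of the recursion contributes precisely one application of standard \ar{}, and the result follows.
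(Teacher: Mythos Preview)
Your proposal is correct and follows essentially the same approach as the paper: the paper's own proof is a one-line sketch invoking the correctness of basic \ar{} together with an induction on the recursion (phrased there as ``the number of times line 2 has been used''), and you have simply written this out in full, inducting on $|S|$ and identifying lines 3--6 as an instance of ordinary \ar{} applied to the proposal measure produced by the two sub-calls. The only cosmetic difference is the induction variable; the probabilistic content is identical.
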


This is similar to the self-reducibility of Jerrum, Valiant, and Vazirani~\cite{jerrumvv1986}.  More recently,
this approach has also been formalized in~\cite{huber2015b}.  The proof follows
immediately from the correctness of basic \ar{} and an induction on the number of times line 2 has
been used.

\subsection{The \prar{} protocol}

Suppose we desire only to know $X(v)$ for a single node $v$.  Then first
generate $X(v)$ and $X(V \setminus \{v\})$ separately, and then choose to 
accept or reject the combination.  The key idea of \prar{} is that sometimes
this does not require that we generate $X(V \setminus \{v\})$.  For example,
in the independent set model, if $X(v) = 0$, then we accept no matter what
$X(V \setminus \{v\})$ is.  Even if $X(v) = 1$, we do not need to know the entirely
of $X(V \setminus \{v\})$, only those neighbors of $v$.




The general protocol looks like this.
\begin{center}
\setcounter{line}{1}
\begin{tabular}{rl}
\toprule 
\multicolumn{2}{l}{{\tt Partially\_Recursive\_AR}} \\
\multicolumn{2}{l}{{\em Input:  } $S$ and $D$ where $S \subseteq D$} \\
\multicolumn{2}{l}{{\em Output: } $[X(S'),S']$ where $X$ is a draw from $w$ over dimensions $D$ and 
$S \subseteq S'$} \\
\midrule 
\gk & If $S = \emptyset$ then output $[\emptyset,\emptyset]$ \\
\gk & Else \\
\gk & \hspace*{1em} Let $d$ be any element of $S$, let $D_1 \leftarrow \{d\}$ \\
\gk & \hspace*{1em} Draw $X(d)$ randomly from $w_1$ over $\mu$ \\
\gk & \hspace*{1em} Draw $U$ uniformly from $[0,1]$ \\
\gk & \hspace*{1em} If $U \leq \min_{x(D \setminus \{d\})} w_{12}(X(D_1),x(D \setminus D_1))/M$ \\
\gk & \hspace*{2em}   Draw $[X(S''),S''] \leftarrow 
                            {\tt Partially\_Recursive\_AR}(S \setminus \{d\},D \setminus \{d\})$ \\
\gk & \hspace*{2em}   $S' \leftarrow S'' \cup \{d\}$.  
   Output $[X(S'),S']$ and quit \\
\gk  & \hspace*{1em} Else \\
\gk & \hspace*{2em} Let $D_3$ be the smallest set such that $w_{12}(X(D_1 \cup D_3))$ determines $w_{12}(X)$ \\
\gk & \hspace*{2em} $[X(S''),S''] \leftarrow {\tt Partially\_Recursive\_AR}(D_3,D \setminus \{d\})$ \\
\gk & \hspace*{2em} If $U \leq w_{12}(X(D_1 \cup D_3))/M$, then 
  $S' \leftarrow S'' \cup \{d\}$, output $[X(S'),S']$ \\
 \gk & \hspace*{2em} Else let $[X(S'),S] \leftarrow {\tt Partially\_Recursive\_AR}(S,D)$
  and output $[X(S'),S']$ \\
\bottomrule
\end{tabular}
\end{center}

\begin{lemma}
Suppose {\tt Partially\_Recursive\_AR}$(S,D)$ terminates with probability 1 with output
$(X(S),D)$.  
Then $X(S)$ has the same distribution as the coordinates of $S$ from $X$ where 
$X$ is a complete draw from density $w$ with respect to $\mu$ over dimensions $D$.
\end{lemma}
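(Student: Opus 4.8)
The plan is to recognize \prar{} as a \emph{lazy} realization of ordinary acceptance--rejection applied to the factorization $w = w_1(x(d))\, w_2(x(D \setminus \{d\}))\, w_{12}(x)$ associated with the partition $D_1 = \{d\}$, $D_2 = D \setminus \{d\}$ chosen on line 3. Because the coordinates are sampled blockwise and $w_1$, $w_2$ depend on the disjoint blocks $\{d\}$ and $D \setminus \{d\}$, drawing $X(d)$ from $w_1$ against $\mu$ (line 4) together with a draw of $X(D \setminus \{d\})$ from $w_2$ against $\mu$ yields a proposal whose density relative to $\mu$ is proportional to $w_1 w_2$, and the target $w$ differs from it only through the factor $w_{12} \leq M$. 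Hence the idealized process --- propose the pair, draw $U$, accept iff $U \leq w_{12}(X)/M$, and otherwise restart --- is exactly basic \ar{}, and by the correctness of \ar{} its accepted output is a draw from $w$ over $D$. The whole proof consists of showing that \prar{} executes this same process while revealing only the coordinates it is forced to examine.

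I would argue by induction on $|D|$. The base case is $S = \emptyset$ (line 1), whose empty output matches the empty restriction trivially. For the inductive step I fix one invocation and condition on its own randomness $(X(d), U)$. The test on line 6 splits the acceptance event $\{U \leq w_{12}(X)/M\}$ into two disjoint pieces. In the first branch $U$ lies below $\min_{x} w_{12}(X(d), x)/M$, so the test $U \leq w_{12}(X)/M$ would succeed for \emph{every} value of $X(D \setminus \{d\})$; acceptance is therefore independent of the unexamined block, and the recursive call on line 7 --- correct by the inductive hypothesis, since $|D \setminus \{d\}| < |D|$, and returning coordinates distributed as the restriction of a draw from the model density over $D \setminus \{d\}$, namely the factor $w_2$ --- supplies whatever further coordinates are needed without perturbing their law. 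In the second branch the minimality of $D_3$ on line 10 (the smallest set with $w_{12}(X(D_1 \cup D_3)) = w_{12}(X)$) guarantees that revealing only $X(D_3)$, again drawn correctly over $D \setminus \{d\}$ by the inductive hypothesis (line 11), is enough to evaluate the \emph{exact} acceptance test of the idealized process on line 12, after which \prar{} accepts or restarts.

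To make the induction close I carry a statement slightly stronger than the lemma: the entire output $X(S')$, with the random set $S' \supseteq S$, is distributed as the $S'$-restriction of a complete draw from $w$ over $D$; the asserted conclusion for $X(S)$ then follows by marginalizing over the unexamined coordinates and over $S'$. The picture behind this, which I would use to organize the bookkeeping, is a coupling of a \prar{} execution with a genuine sample $X^\ast$ from $w$ over $D$ in which every revealed coordinate satisfies $X(S') = X^\ast(S')$: one completes the coordinates \prar{} never inspects according to their correct conditional sub-model distributions --- legitimate precisely because of the uniform threshold in the first branch and the defining property of $D_3$ in the second --- and checks that the completed configuration is an honest \ar{} sample, so that $X(S) = X^\ast(S)$ has the target marginal.

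Finally, the restart on line 13 re-invokes \prar{} on the same input $(S,D)$ and so does not decrease $|D|$; here I invoke the hypothesis that the algorithm terminates with probability $1$, which makes the number of restarts almost surely finite and lets me condition on the first non-restarting pass exactly as in the correctness proof of basic \ar{}. The main obstacle, and the step warranting the most care, is the partial-revelation argument above: proving that committing to accept (first branch) or computing the acceptance test from $D_3$ alone (second branch) biases neither the revealed coordinates nor the conditional law of those left unexamined. This is exactly where the $\min$-over-$x$ threshold and the minimality of $D_3$ do the work, since together they certify that \prar{}'s accept/reject decision coincides with that of the full-information idealized process on the relevant event while depending on nothing beyond the coordinates \prar{} has already exposed.
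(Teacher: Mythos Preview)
Your proposal is correct and rests on the same underlying logic as the paper: recognize \prar{} as basic \ar{} for the factorization $w = w_1 w_2 w_{12}$, and observe that the two branches (line~6 versus lines~10--12) decide the \ar{} acceptance test while revealing only the coordinates needed to do so. The paper's own argument is much terser because it invokes the \emph{Fundamental Theorem of Perfect Simulation} as a black box: under almost-sure termination one may simply assume that every recursive call (lines~7, 11, and~13) already returns a correctly distributed output, after which the equivalence with {\tt Fully\_Recursive\_AR} is immediate. Your explicit induction on $|D|$ for lines~7 and~11, combined with the ``condition on the first non-restarting pass'' argument for line~13, is precisely a self-contained unwinding of that theorem for this particular algorithm, and your coupling with a genuine sample $X^\ast$ makes explicit the partial-revelation reasoning the paper leaves implicit. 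What you gain is a proof with no external dependency; what the paper gains is brevity.
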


\begin{proof}
  The Fundamental Theorem of Perfect Simulation~\cite{huber2015b,huberPrePrint7} says that if the 
  algorithm terminates with probability 1, then we assume that lines 1, 6, 11, and 13 have the 
  correct output when proving that the overall algorithm is correct.  

  With that assumption, lines 7 and 11 are drawing a state $X$, and then reporting $X(S'')$ for
  some set of dimensions $S''$ that contain $S$.  So the algorithm's output has the same distribution as
  {\tt Fully\_Recursive\_AR}, but not all 
  coordinates of $X$ are reported in the output.  
\end{proof}

It is the analysis of the running time where we see a Dobrushin like condition.

\begin{lemma}
Let $c_1 < 1$ be the probability that line 13 is reached.  Let $p(d)$ denote the probability
that line 10 is reached given $D_1 = d$. 
Suppose for all $d \in D$, there exists $B$ with $\mean[\#(D_3)|d] \leq B$, and that 
$\mean[\#(D_3)|d]p(d) \leq c_2 <  1$.  Given $S$
then the expected number of times line 4 is called (or equivalently line 5) is at most
$(1-c_1)^{-1} (1 - c_2)^{-1} \#(S)$.
\end{lemma}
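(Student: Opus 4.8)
The plan is to prove the bound by induction on the number of dimensions $\#(D)$, using the linear ansatz $L(S,D) \le K\,\#(S)$ with $K = (1-c_1)^{-1}(1-c_2)^{-1}$, where $L(S,D)$ denotes the expected number of executions of line 4 during a call {\tt Partially\_Recursive\_AR}$(S,D)$. The first observation is that line 4 is executed exactly once per invocation with $S \ne \emptyset$, so $L(S,D)$ is just the expected number of such invocations, and $L(\emptyset,D)=0$ supplies the base case. Induction on $\#(D)$ (rather than on $\#(S)$) is the right choice, because every recursive call other than the line-13 restart is made over the strictly smaller dimension set $D\setminus\{d\}$: the branch call on line 7, and the exploration call on line 11 that generates $X$ on $D_3 \subseteq D\setminus\{d\}$. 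Only the restart on line 13 is over the same pair $(S,D)$, and it will be absorbed separately.

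First I would set up the one-step recursion for a fixed choice of $d$. Conditioning on line 6, with probability $1-p(d)$ the algorithm recurses on $(S\setminus\{d\},D\setminus\{d\})$; with probability $p(d)$ it reaches line 10, pays for the exploration of $D_3$ through the call on line 11, and then either accepts on line 12 or reaches line 13 and restarts $(S,D)$. Writing $c_1$ for the probability of reaching line 13 and solving the resulting fixed-point equation for the restart yields
\[
L(S,D) = \frac{1}{1-c_1}\Bigl(1 + (1-p(d))\,L(S\setminus\{d\},D\setminus\{d\}) + p(d)\,\mean\bigl[L(D_3,D\setminus\{d\})\mid d\bigr]\Bigr),
\]
where the division by $1-c_1$ records the expected number of restart rounds as a geometric series. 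This is the step that produces the first factor $(1-c_1)^{-1}$.

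Next I would feed in the inductive hypothesis. Since $D\setminus\{d\}$ has fewer dimensions, both $L(S\setminus\{d\},D\setminus\{d\}) \le K(\#(S)-1)$ and $L(D_3,D\setminus\{d\}) \le K\,\#(D_3)$ hold. The branch term is then at most $K(\#(S)-1)$, and the exploration term is controlled by the subcriticality hypothesis, $p(d)\,\mean[L(D_3,\cdot)\mid d] \le K\,p(d)\,\mean[\#(D_3)\mid d] \le K c_2$. The uniform bound $\mean[\#(D_3)\mid d]\le B$ is what guarantees these expectations are finite, so that the recursion and the termination-with-probability-one hypothesis are not vacuous. It then remains to check that $(1-c_1)^{-1}\bigl(1 + K(\#(S)-1) + K c_2\bigr) \le K\,\#(S)$, which closes the induction.

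The main obstacle is exactly this last verification on the \emph{sharp} product constant. A naive substitution of the per-level geometric tends to produce a constant of the additive shape $(1-c_1-c_2)^{-1}$, and extracting the tighter product $(1-c_1)^{-1}(1-c_2)^{-1}$ requires that the restart multiplier and the $D_3$-exploration multiplier be disentangled rather than compounded at every level of the recursion. The cleanest route I see is to establish the two factors separately: first bound the expected number of line-4 calls of the \emph{restart-free} algorithm (line 13 replaced by failure) by $(1-c_2)^{-1}\#(S)$, viewing the dimensions ever processed as the total progeny of a subcritical Galton--Watson forest with $\#(S)$ roots and per-dimension offspring mean $p(d)\,\mean[\#(D_3)\mid d] \le c_2$; and then reinstate the restarts as an outer geometric contributing the factor $(1-c_1)^{-1}$. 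Justifying rigorously that these two mechanisms decouple, so that the expectations genuinely multiply, is the delicate point, and is where I would expect to spend most of the effort.
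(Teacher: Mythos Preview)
Your final route---bound the restart-free algorithm by a subcritical branching argument giving $(1-c_2)^{-1}\#(S)$, then reinstate line~13 as an outer geometric contributing $(1-c_1)^{-1}$---is exactly the paper's proof. The paper phrases the first step as a supermartingale drift: it lets $N_t$ be the total size of all pending $S$-sets after $t$ recursive calls, observes $\mean[N_{t+1}\mid N_t,d] \le N_t - (1-c_2)$, and invokes optional stopping to get $\mean[\text{hitting time of }0] \le \#(S)/(1-c_2)$. This is equivalent to your Galton--Watson total-progeny view.

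The decoupling you flag as delicate is not, and the paper does not labor over it. Once line~13 is replaced by ``quit'', each top-level attempt on $(S,D)$ is an independent trial costing at most $\#(S)/(1-c_2)$ expected line-4 calls; the number of attempts until line~13 is \emph{not} reached is stochastically dominated by a geometric with success probability $1-c_1$, and Wald's identity (or just linearity over a geometric number of i.i.d.\ blocks) gives the product bound directly. There is no compounding of the two multipliers across recursion levels, because the restart on line~13 returns to the \emph{same} $(S,D)$ rather than to a deeper call.

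Your opening induction on $\#(D)$ with the explicit recursion for $L(S,D)$ is a detour the paper skips entirely; as you yourself noticed, it naturally produces $(1-c_1-c_2)^{-1}$ rather than the product, so you may as well drop it and lead with the two-phase argument.
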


\begin{proof}
  Consider two temporary changes to the algorithm.  First, at line 13, change it to Else, output
  $[\emptyset,\emptyset]$ and quit.
  Second, at line 3, add $S \leftarrow S \setminus \{d\}$
  to the end. Because of our previous change to line 13, this does not alter the algorithm any further.
  
  Consider the expected time needed to either reach our new line 13, or to quit.  Then there might
  be several recursive calls to {\tt Paritall\_Recursive\_AR} over the run of the algorithm.  Let 
  $N_t$ be the number of dimensions in the union of the $S$ sets after $t$ recursive calls to the 
  algorithm.  Then $N_0 = \#(S)$.  
  
  Now consider $N_{t+1}$ given $N_t$.  Two things can happen.  If lines 7 and 8 activate, then 
  $N_{t+1} = N_t - 1$.  If lines 10 and 11 activate, then $N_{t+1} = N_t  - 1 + \#(D_3)$.  This second event
  happens with probability $p(d)$.  So
  $\mean[N_{t+1}|N_t,d] = N_t - 1 + p(d) \mean[\#(D_3)|d].$  
  
  When this right hand side is at most $N_t - (1 - c_2)$, 
  standard martingale theory (see for instance~\cite{durrett2005})
  gives that the expected amount of time for the $N_t$ process to reach 0 is $N_0/(1 - c_2)$.
  
  Now consider what happens if line 13 is changed back.  Then the expected time between reaching line
  13 is $\#(S)(1 - c_2)^{-1}$.  Each time through the algorithm there is at least a $1 - c_1$ chance
  line 13 is not executed and the algorithm does not recurse further.  Therefore, the total expected
  number of steps taken is at most $(1-c_1)^{-1}(1-c_2)^{-1}\#(S)$.
\end{proof}

\section{Applications}
\label{SEC:apply}

\subsection{Independent sets of a graph}

Independent sets of a graph label each node either 0 or 1.  In this case where there are only two colors, 
the general \prar{} can be
simplified.  If the label of a node is 0, then the node is immediately accepted.  But if the label is 1,
then all the neighbors of the node must also be 0.  If a neighbor is 1, then we can immediately track down
its neighbors to ensure that they are all 0, and so on.  

So if a neighbor of the end of the backbone is labeled 1, that increases the length of the backbone by 1.  
On the other hand, if all the neighbors of the end of the backbone at labeled 0, then the node that is the 
end of the backbone accepts its label of 1.  If the backbone had length 1, then we are done.  Otherwise, that
meant the the second to the end node in the backbone rejects--because it comes into conflict with the end of
the backbone that is labeled 1.  Therefore the last two nodes in backbone are removed from the backbone and
return to being unresolved.  The process then starts over again.

This gives us a connected sequence of nodes labeled 1 that grows and shrinks in length.  Call this sequence
the backbone of the current state, and the method {\tt Backbone\_PRAR}.  See Figure~\ref{FIG:backbone} for 
an illustration.

\begin{center}
\begin{figure}[ht]
\begin{tikzpicture}
  \begin{scope}[scale=0.7]
  \draw[dotted] (0,0) grid (2,2);
  \foreach \x in {0,1,2}
    \foreach \y in {0,1,2}
      \draw[fill=black!20!white,thick] (\x,\y) circle(5pt);
  \draw (1,-.8) node {Step 1};
  \draw[fill=black,thick] (0,2) circle(5pt);
  \end{scope}
  \begin{scope}[scale=0.7,xshift=4cm]
  \draw[dotted] (0,0) grid (2,2);
  \foreach \x in {0,1,2}
    \foreach \y in {0,1,2}
      \draw[fill=black!20!white,thick] (\x,\y) circle(5pt);
  \draw[fill=black,thick] (0,2) circle(5pt);
  \draw[fill=white,thick] (1,2) circle(5pt);
  \draw (1,-.8) node {Step 2};
  \end{scope}
  \begin{scope}[scale=0.7,xshift=8cm]
  \draw[dotted] (0,0) grid (2,2);
  \foreach \x in {0,1,2}
    \foreach \y in {0,1,2}
      \draw[fill=black!20!white,thick] (\x,\y) circle(5pt);
  \draw (1,-.8) node {Step 3};
  \draw[fill=black,thick] (0,2) circle(5pt);
  \draw[fill=white,thick] (1,2) circle(5pt);
  \draw[fill=black,thick] (0,1) circle(5pt);
 \end{scope}
  \begin{scope}[scale=0.7,xshift=12cm]
  \draw[dotted] (0,0) grid (2,2);
  \foreach \x in {0,1,2}
    \foreach \y in {0,1,2}
      \draw[fill=black!20!white,thick] (\x,\y) circle(5pt);
  \draw (1,-.8) node {Step 4};
  \draw[fill=black,thick] (0,2) circle(5pt);
  \draw[fill=white,thick] (1,2) circle(5pt);
  \draw[fill=black,thick] (0,1) circle(5pt);
  \draw[fill=black,thick] (1,1) circle(5pt);
  \end{scope}
  \begin{scope}[scale=0.7,xshift=16cm]
  \draw[dotted] (0,0) grid (2,2);
  \foreach \x in {0,1,2}
    \foreach \y in {0,1,2}
      \draw[fill=black!20!white,thick] (\x,\y) circle(5pt);
  \draw (1,-.8) node {Step 5};
  \draw[fill=black,thick] (0,2) circle(5pt);
  \draw[fill=white,thick] (1,2) circle(5pt);
  \draw[fill=black,thick] (0,1) circle(5pt);
  \draw[fill=black,thick] (1,1) circle(5pt);
  \draw[fill=black,thick] (2,1) circle(5pt);
\end{scope}
  \begin{scope}[scale=0.7,xshift=20cm]
  \draw[dotted] (0,0) grid (2,2);
  \foreach \x in {0,1,2}
    \foreach \y in {0,1,2}
      \draw[fill=black!20!white,thick] (\x,\y) circle(5pt);
  \draw (1,-.8) node {Step 6};
  \draw[fill=black,thick] (0,2) circle(5pt);
  \draw[fill=white,thick] (1,2) circle(5pt);
  \draw[fill=black,thick] (0,1) circle(5pt);
  \draw[fill=black,thick] (1,1) circle(5pt);
  \draw[fill=black,thick] (2,1) circle(5pt);
  \draw[fill=black,thick] (2,0) circle(5pt);
  \end{scope}
  \begin{scope}[scale=0.7,xshift=0cm,yshift=-4cm]
  \draw[dotted] (0,0) grid (2,2);
  \foreach \x in {0,1,2}
    \foreach \y in {0,1,2}
      \draw[fill=black!20!white,thick] (\x,\y) circle(5pt);
  \draw (1,-.8) node {Step 7};
  \draw[fill=black,thick] (0,2) circle(5pt);
  \draw[fill=white,thick] (1,2) circle(5pt);
  \draw[fill=black,thick] (0,1) circle(5pt);
  \draw[fill=black,thick] (1,1) circle(5pt);
  \draw[fill=black,thick] (2,1) circle(5pt);
  \draw[fill=black,thick] (2,0) circle(5pt);
  \draw[fill=white,thick] (1,0) circle(5pt);
\end{scope}
  \begin{scope}[scale=0.7,xshift=4cm,yshift=-4cm]
  \draw[dotted] (0,0) grid (2,2);
  \foreach \x in {0,1,2}
    \foreach \y in {0,1,2}
      \draw[fill=black!20!white,thick] (\x,\y) circle(5pt);
  \draw (1,-.8) node {Step 8};
  \draw[fill=black,thick] (0,2) circle(5pt);
  \draw[fill=white,thick] (1,2) circle(5pt);
  \draw[fill=black,thick] (0,1) circle(5pt);
  \draw[fill=black,thick] (1,1) circle(5pt);
  \end{scope}
  \begin{scope}[scale=0.7,xshift=8cm,yshift=-4cm]
  \draw[dotted] (0,0) grid (2,2);
  \foreach \x in {0,1,2}
    \foreach \y in {0,1,2}
      \draw[fill=black!20!white,thick] (\x,\y) circle(5pt);
  \draw (1,-.8) node {Step 9};
  \draw[fill=black,thick] (0,2) circle(5pt);
  \draw[fill=white,thick] (1,2) circle(5pt);
  \draw[fill=black,thick] (0,1) circle(5pt);
  \draw[fill=black,thick] (1,1) circle(5pt);
  \draw[fill=white,thick] (2,1) circle(5pt);
  \end{scope}
  \begin{scope}[scale=0.7,xshift=12cm,yshift=-4cm]
  \draw[dotted] (0,0) grid (2,2);
  \foreach \x in {0,1,2}
    \foreach \y in {0,1,2}
      \draw[fill=black!20!white,thick] (\x,\y) circle(5pt);
  \draw (1,-.8) node {Step 10};
  \draw[fill=black,thick] (0,2) circle(5pt);
  \draw[fill=white,thick] (1,2) circle(5pt);
  \draw[fill=black,thick] (0,1) circle(5pt);
  \draw[fill=black,thick] (1,1) circle(5pt);
  \draw[fill=white,thick] (2,1) circle(5pt);
  \draw[fill=white,thick] (1,0) circle(5pt);
  \end{scope}
  \begin{scope}[scale=0.7,xshift=16cm,yshift=-4cm]
  \draw[dotted] (0,0) grid (2,2);
  \foreach \x in {0,1,2}
    \foreach \y in {0,1,2}
      \draw[fill=black!20!white,thick] (\x,\y) circle(5pt);
  \draw (1,-.8) node {Step 11};
  \draw[fill=black,thick] (0,2) circle(5pt);
  \draw[fill=white,thick] (1,2) circle(5pt);
  \end{scope}
  \begin{scope}[scale=0.7,xshift=20cm,yshift=-4cm]
  \draw[dotted] (0,0) grid (2,2);
  \foreach \x in {0,1,2}
    \foreach \y in {0,1,2}
      \draw[fill=black!20!white,thick] (\x,\y) circle(5pt);
  \draw (1,-.8) node {Step 12};
  \draw[fill=black,thick] (0,2) circle(5pt);
  \draw[fill=white,thick] (1,2) circle(5pt);
  \draw[fill=white,thick] (0,1) circle(5pt);
  \end{scope} 
\end{tikzpicture}
\caption{Backbone \prar{} for independent sets. Here white stands for label 0 and black for label 1.  
For instance, at step 7 the bottom center node was labeled 0.  This meant that the bottom right node was accepted,
which then meant that the right center node was rejected, so both were removed at Step 8.  By step 12 the backbone 
has resolved and the upper left labeling of 1 is accepted.}
\label{FIG:backbone}
\end{figure}
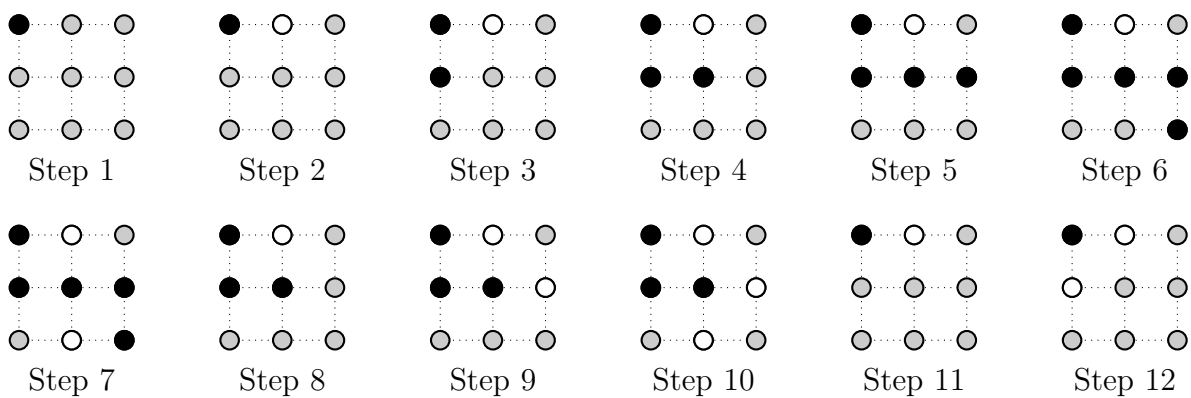
\end{center}

This idea of a backbone works on other applications with two colors, such as the Ising model and the Strauss
model.

\begin{lemma}
The backbone method for independent sets
only requires an independent, identically distributed stream of Bernoulli random
variables with mean $\lambda/(1+\lambda)$.
Let $\Delta$ be the maximum degree of the graph and 
\[
\gamma = \frac{\lambda}{1+\lambda}\sum_{i=1}^{\Delta - 1} \left(1 - \frac{\lambda}{(1+\lambda)^{\Delta}}\right)^{i-1}.
\]
If $\gamma < 1$, then the expected number of Bernoulli's needed to resolve one node is bounded above by  
$C = 1+\lambda \Delta (\Delta - 1)/(1-\gamma)$.  Since this is a constant with respect to the size of the graph,
the time needed to generate a sample over the entire graph is at most $C n$.
\end{lemma}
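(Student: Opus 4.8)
\section*{Proof proposal}

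The plan is to prove the three assertions in the order stated: first that the algorithm consumes only $\berndist(\lambda/(1+\lambda))$ variates, then that its expected cost to resolve a single node is the stated constant $C$, and finally that summing over the $n$ nodes yields the $Cn$ bound. For the Bernoulli-only claim I would observe that for independent sets $w(x)\in\{0,1\}$, so the bounding constant may be taken as $M=1$ and each factor $w_{12}$ is itself $0$ or $1$. Consequently the test ``$U\le w_{12}/M$'' in the general \prar{} protocol becomes a deterministic function of the labels — it accepts exactly when no two adjacent examined nodes both carry a $1$ — so no auxiliary uniform is ever required. Since $\mu$ is sampled by drawing each label from $\berndist(\lambda/(1+\lambda))$, the whole run is driven by an i.i.d.\ stream of such Bernoullis.

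The heart of the argument is the running-time bound, where I would model the backbone as a branching process (equivalently, track its length by the negative-drift walk $N_t$ of the preceding lemma). A node entering the backbone is processed by examining its at most $\Delta-1$ non-parent neighbors in turn, and a neighbor found to carry a $1$ is recursed into and becomes a child. The key estimate is that the expected number of children of any processed node is at most $\gamma$. To see this, note that we reach the $i$-th neighbor only if none of the first $i-1$ has already forced the current node to be rejected; a neighbor forces rejection whenever it is itself accepted with label $1$, and a sufficient event for that is that the neighbor draws a $1$ and all of its own $\le\Delta-1$ further neighbors draw $0$, which has probability at least
\[
r=\frac{\lambda}{1+\lambda}\left(1-\frac{\lambda}{1+\lambda}\right)^{\Delta-1}=\frac{\lambda}{(1+\lambda)^{\Delta}}.
\]
Hence the probability of reaching the $i$-th neighbor is at most $(1-r)^{i-1}$, and multiplying by the probability $\lambda/(1+\lambda)$ that this neighbor carries a $1$ and summing over $i$ reproduces exactly the definition of $\gamma$ as an upper bound on the expected number of children.

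With the offspring mean bounded by $\gamma<1$ the backbone process is subcritical, so its expected total progeny starting from one root is at most $(1-\gamma)^{-1}$ (a geometric series); equivalently $N_t$ has drift at most $\gamma-1$ and reaches $0$ in expected $O((1-\gamma)^{-1})$ steps, exactly as in the martingale argument of the previous lemma. Charging each processed node its at most $\Delta-1$ neighbor examinations, adding the single initial label draw, and using that the root is labeled $1$ with probability at most $\lambda$ while having at most $\Delta$ neighbors, the expected number of Bernoullis to resolve one node is at most $C=1+\lambda\Delta(\Delta-1)/(1-\gamma)$. Since resolving a node only ever resolves additional nodes for free, summing this bound over the $n$ nodes gives total expected time at most $Cn$.

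I expect the main obstacle to be rigorously justifying the offspring bound for the actual recursive dynamics. The estimate $(1-r)^{i-1}$ for reaching the $i$-th neighbor must survive the dependencies introduced by recursion and by the redraw-on-rejection step — when a node accepts, the last two backbone nodes are removed and redrawn — and one must confirm that this backtracking generates no expected work beyond that of the dominating subcritical Galton--Watson process. Pinning down the lower bound $r$ on the rejection-forcing probability and the stochastic-domination statement that legitimizes the product $(1-r)^{i-1}$ is the crux; the passage from expected total progeny to the explicit constant $C$ is then routine bookkeeping.
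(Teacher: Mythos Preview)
Your proposal follows essentially the same route as the paper. Both proofs hinge on the observation that one only reaches the $i$-th neighbor if none of the previous $i-1$ neighbors has already resolved to a $1$, and both lower-bound the probability of such a resolving-to-$1$ event by $r=\lambda/(1+\lambda)^{\Delta}$ (the event that the neighbor draws a $1$ and all of its further $\le\Delta-1$ neighbors draw $0$). This yields the same expression for $\gamma$, and the subcritical bound $1/(1-\gamma)$ follows either by your Galton--Watson total-progeny argument or, equivalently, by the paper's self-referential inequality $R\le 1+\gamma R$.

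One small point of divergence: you attribute the factor $\lambda$ in $C$ to ``the root is labeled $1$ with probability at most $\lambda$,'' whereas the paper obtains it from the expected number of backbone \emph{rejections} before acceptance, which is $(1+\lambda)-1=\lambda$ (a geometric count with success probability $1/(1+\lambda)$). This is exactly the redraw-on-rejection issue you flag in your last paragraph; the paper handles it by separating the analysis into (i) the expected number of resolution rounds and (ii) the expected cost of a single round, rather than trying to fold the restart into the branching picture. Adopting that decomposition would resolve the dependency concern you raise.
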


\begin{proof}
  When the backbone resolves, it resolves either as an acceptance or a rejection.  The number of resolutions
  until acceptance is stochastically bounded above by a geometric random variable with parameter $1/(1+\lambda)$.
  Therefore, the expected number of resolutions of the backbone is at most $(1+\lambda) - 1 = \lambda$.
  
  How many draws does it take to resolve a node?  It certainly takes one to determine the label for the node.
  So if $T$ is the resolution time, then $\mean[T] \leq 1 + \Delta R$, where $R$ is an upper bound on the 
  expected resolution time for subsequent nodes.  These subsequent nodes have maximum degree at most $\Delta - 1$.
  The first neighbor takes at most $R$ expected time.
  
  On to the second neighbor.  Note that the second neighbor only needs to be resolved if the first neighbor resolved
  to a 0.  If the first neighbor had resolved to a 1, then the original node is rejected, there is no need for 
  further action.  The chance that a neighbor resolves to 1 is at least 
  $c = [\lambda/(1+\lambda)][1/(1+\lambda)]^{\Delta - 1}$.  So the second neighbor is only activated with
  probability $1 - c$.  
  
  Similarly, the third neighbor can also take time 
  distributed as $R$, but only if the first two neighbors fail.  Adding this up over the (up to $\Delta - 1$) neighbors gives
  \[
  R \leq 1 + \frac{\lambda}{1+\lambda} \sum_{i=1}^{\Delta - 1} R \left(1 - \left[\frac{\lambda}{1+\lambda}\right]
   \left[\frac{1}{1+\lambda}\right]^{\Delta - 1}\right)^{i-1},
  \]
  which gives the result.
\end{proof}

It is straightforward to verify that for 
$\lambda < 1.13/(\Delta - 2)$ we have $\gamma < 1$. 
This gives 
the result presented in Section~\ref{SEc:introduction}.

Call the value of $\lambda$ where $\gamma = 1$ the {\em critical value} of $\lambda$, and denote it
$\lambda_c$.  For $\lambda < \lambda_c$, the algorithm is guaranteed to generate samples in polynomial time.
For $\lambda \geq \lambda_c$, the algorithm might operate in polynomial time, or it might not.

\begin{figure}[ht]
\begin{center}
\begin{tabular}{rlll}
$\Delta$ & $\lambda_c$ for RR & $\lambda_c$ for CoA
 & $\lambda_c$ for PRAR \\
\midrule
3 & 0.2 &            0.5          & 1.13224\ldots \\
4 & 0.142857\ldots & 0.333\ldots  & 0.57833\ldots \\
5 & 0.111111\ldots & 0.25         & 0.29315\ldots \\
\end{tabular}
\caption{Range of critical $\lambda$ compared to 
$\Delta$.  Since expected sampling time is guaranteed linear for $\lambda < \lambda_c$, higher
is better.  Column for $\lambda_c$ for the Randomness Recycler (RR) comes from Theorem 1 of~\cite{huber2000b}.  Column for 
$\lambda_c$ for Clan of Ancestors (CoA) comes from a branching process analysis.}
\end{center}
\end{figure}

\subsection{Strauss process}

Now consider what happens in the Strauss process.
Again consider drawing an initial node and accepting
or rejecting based upon the neighbors of the node.

If the node is labeled 0, then we always accept as 
before.  If the node is labeled 1, then the chance of accepting is $\alpha$ raised to the power of each of the neighbors of the node labeled 1.  

Another way to view this, is to, for each edge adjacent to the original node, draw a $\berndist(\alpha)$ random
variable.  If this variable is 1, then the neighbor does not matter.  If this variable is 0, then the neighbor does matter, and recursion needs to be used to find out its value.

This changes the expected number of neighbors to be
considered from $\Delta - 1$ to $(\Delta - 1)\alpha$.  The chance of a node being labeled 1 is strictly greater in the recursion, therefore, the new value of $\gamma_\alpha$ is at most
\[
\gamma_\alpha = \frac{\lambda}{1+\lambda} \alpha(\Delta - 1).
\]
Note that $\gamma_\alpha < 1$ is equivalent to 
$\lambda < 1/[\alpha(\Delta - 1) -1]$, so 
a similar argument to the previous section gives the
following result.

\begin{lemma}
When $\lambda < [\alpha(\Delta - 1) -1]^{-1},$ then \prar{} generates
a sample from the Strauss process using $O(n)$ 
expected number of Bernoulli draws.
\end{lemma}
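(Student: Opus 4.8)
The plan is to follow the proof of the independent-set lemma essentially verbatim, since the Strauss process is a two-color model on which the backbone construction of the previous subsection applies without change, and the only new ingredient is how a conflicting edge triggers recursion. A node labeled $0$ is still accepted outright; for a node labeled $1$, rather than inspect each neighbor's label directly I would, as described above, attach to every incident edge an independent $\berndist(\alpha)$ variable and resolve the neighbor only when that edge actually forces the issue. This per-edge gating is exactly what turns the branching factor $\Delta-1$ of the independent-set analysis into the effective value captured by
\[
\gamma_\alpha = \frac{\lambda}{1+\lambda}\,\alpha(\Delta-1),
\]
which the paragraph above has already shown satisfies $\gamma_\alpha < 1$ exactly when $\lambda < [\alpha(\Delta-1)-1]^{-1}$.

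First I would dispose of correctness. Because the backbone method for the Strauss process is a specialization of {\tt Partially\_Recursive\_AR}, once termination with probability $1$ is in hand the correctness lemma certifies that the output has the law of the Strauss process, so only the expected running time is at issue. Next, exactly as in the independent-set proof, I would observe that each completed pass of the backbone ends in either an acceptance or a rejection, and that the number of rejections preceding an acceptance is stochastically dominated by a geometric random variable with parameter $1/(1+\lambda)$, bounding the expected number of backbone resolutions by a constant depending only on $\lambda$. It then remains to bound the expected number of Bernoulli draws consumed within a single resolution.

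For that I would set up the resolution-time recursion as before. Let $R$ bound the expected number of draws needed to resolve a node met during recursion (such a node has degree at most $\Delta-1$). Drawing the node's own label costs one Bernoulli; with probability $\lambda/(1+\lambda)$ the label is $1$, in which case each of its at most $\Delta-1$ further neighbors is gated by its edge Bernoulli before possibly being resolved at expected cost $R$. Using the cruder bound that discards the geometric discount retained in the independent-set case, this yields
\[
R \le 1 + \gamma_\alpha R,
\]
so $R \le (1-\gamma_\alpha)^{-1}$ whenever $\gamma_\alpha < 1$. Under the hypothesis $\lambda < [\alpha(\Delta-1)-1]^{-1}$ the equivalence above gives $\gamma_\alpha < 1$, hence a finite constant $C$ bounding the expected work per node; summing over the $n$ nodes delivers the $O(n)$ bound. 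Equivalently, one could invoke the general running-time lemma directly, taking $c_2 = \gamma_\alpha$ and identifying $\mean[\#(D_3)\mid d]\,p(d)$ with $\gamma_\alpha$.

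The main obstacle is making the single recursion $R \le 1 + \gamma_\alpha R$ rigorous rather than heuristic. One must verify that the expected number of children genuinely reduces to $\gamma_\alpha$ by correctly combining the three effects at play: the bound $\lambda/(1+\lambda)$ on the probability that a recursed node carries a $1$, the per-edge $\berndist(\alpha)$ gate that decides whether each of the $\Delta-1$ onward neighbors is explored at all, and the bookkeeping --- inherited from the independent-set argument --- that a neighbor need not be resolved once an earlier neighbor has already forced rejection. Establishing $\gamma_\alpha < 1$ is what makes the associated counting process $N_t$ subcritical, so that the martingale argument underlying the running-time lemma applies and guarantees the process reaches $0$ in finite expected time; this both secures termination with probability $1$ (closing the loop with the correctness step) and yields the claimed linear expected number of draws.
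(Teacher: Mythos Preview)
Your proposal is correct and follows essentially the same approach as the paper. The paper does not give a standalone proof of this lemma at all; it simply derives the expression $\gamma_\alpha = \frac{\lambda}{1+\lambda}\,\alpha(\Delta-1)$, notes the equivalence $\gamma_\alpha<1 \Leftrightarrow \lambda < [\alpha(\Delta-1)-1]^{-1}$, and then states that ``a similar argument to the previous section gives the following result''---which is exactly the program you carry out (indeed in more detail than the paper itself).
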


\subsection{Autonormal}

As with the independent set density, the idea is to consider what happens when a single node
is drawn, the rest of the graph is drawn, and the combination is either accepted or rejected.
The value that $U$ from line 6 in the general protocol must fall below is the product of the 
chance of rejecting because of each of the neighboring edges.

That is, in order to determine if 
node $v$ can be combined with the state of $x(V \setminus \{v\})$, it is necessary to 
independently draw a uniform $[0,1]$ random variable for each of the neighbors $w$ of $v$.  
Only if every edge accepts can the entire state be said to accept.

If the single node $v$ is assigned value $x(v)$, the chance of rejecting based on $w$ is at most 
\[
\exp(-\beta(x(v) - x(w))^2) \leq \exp(-\beta\max\{1-x(i),x(i)\}^2),
\]
where the upper bound on the right hand side does not depend on $x(w)$!

Therefore, only if $U_w \geq \exp(-\beta\max\{1-x(i),x(i)\}^2)$ does the value $x(w)$ need to 
be determined.  Therefore, the expected number of neighbors of the original node $v$ that need to 
be found out is bounded above by 
\[
\Delta(1-\exp(-\beta\max\{1-x(i),x(i)\}^2)).
\]
The original node $v$ might have had up to $\Delta$ neighbors, but subsequent nodes will only have at most 
$\Delta - 1$ neighbors.  Therefore, if $(\Delta - 1)(1-\exp(-\beta\max\{1-x(i),x(i)\}^2)) < 1$, then 
on average the number of new nodes to consider generated by a node will be negative, and \prar{} will terminate
after a finite number of steps.  This argument yields the following lemma.

\begin{lemma}
  Let $\gamma = (\Delta - 1)\exp(-\beta)$.  If $\gamma < 1$, then \prar{} can generate a sample from the Autonormal
  model in $O(n)$ random choices on a graph with $n$ nodes and maximum degree $\Delta$.
\end{lemma}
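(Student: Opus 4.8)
The plan is to deduce the statement from the general running-time Lemma proved above by specializing \prar{} to the autonormal density and reading off the constants $c_1$, $c_2$, and $B$ that appear in its hypotheses. For a single chosen node $v = d$, the interaction block is $w_{12}(X) = \prod_{w \sim v}\exp(-\beta(x(v)-x(w))^2)$, a product over the edges incident to $v$; since each factor lies in $(0,1]$ we may take $M = 1$. One draws $X(v)$ from the truncated-normal marginal $\mu$ and then, rather than committing a single $U$ against the whole product, resolves the block edge by edge, which is the feature that makes the bound sharp.

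For each neighbour $w$ of $v$ I would draw an independent uniform $U_w$ and accept that edge \emph{without} ever determining $x(w)$ whenever $U_w \le \exp(-\beta\max\{x(v),1-x(v)\}^2)$. This is legitimate because of the inequality recorded just before the statement,
\[
\exp(-\beta(x(v)-x(w))^2) \;\ge\; \exp(-\beta\max\{x(v),1-x(v)\}^2)\qquad\text{for all } x(w)\in[0,1],
\]
whose right-hand side does not involve $x(w)$. Hence the set $D_3$ of neighbours that truly require recursion is obtained by retaining each of the (at most $\Delta-1$, for a recursive node) neighbours independently with probability $1-\exp(-\beta\max\{x(v),1-x(v)\}^2)$. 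This gives the boundedness hypothesis $\#(D_3)\le \Delta-1 =: B$ immediately, and, conditioning on the value $x(v)$, the expected number of children a node spawns is
\[
(\Delta-1)\bigl(1-\exp(-\beta\max\{x(v),1-x(v)\}^2)\bigr).
\]

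Because $x(v)\in[0,1]$ forces $\max\{x(v),1-x(v)\}^2\le 1$, this expression is largest at $x(v)\in\{0,1\}$, so I would take the drift coefficient $c_2 = (\Delta-1)(1-e^{-\beta})$; the hypothesis $\gamma<1$ is precisely the requirement that $c_2<1$. That $c_1<1$ is automatic, since with positive probability the root's value together with all of its incident edges accept on the first pass, so line 13 is not always reached. Feeding $c_1$, $c_2<1$, and $B$ into the general running-time Lemma bounds the expected number of node draws by $(1-c_1)^{-1}(1-c_2)^{-1}\,n = O(n)$; each such draw costs one truncated-normal sample and $O(\Delta)$ uniforms, so the total number of random choices is $O(n)$. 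The same negative drift yields termination with probability one, which is what licenses the \prar{} correctness Lemma to conclude that the reported coordinates are distributed according to the autonormal law.

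The step I expect to be the main obstacle is the uniform control of the drift in the continuous setting: the per-edge acceptance threshold $\exp(-\beta\max\{x(v),1-x(v)\}^2)$ genuinely depends on the random, continuously distributed value $x(v)$, so the number of children a node spawns is itself random, and the drift inequality underlying the running-time Lemma must hold for \emph{every} admissible $x(v)$ rather than merely in average. Making this rigorous amounts to checking that the worst case over $x(v)\in[0,1]$ is attained at the endpoints and equals $(\Delta-1)(1-e^{-\beta})$; once that uniform bound is secured, the martingale/drift argument of the running-time Lemma applies verbatim and the remaining accounting is routine.
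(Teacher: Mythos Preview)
Your approach is essentially the paper's own: the paragraphs immediately preceding the lemma constitute its proof, and they proceed exactly as you do---resolve the interaction block edge by edge with independent uniforms $U_w$, accept an edge without seeing $x(w)$ whenever $U_w$ falls below the $x(w)$-free lower bound $\exp(-\beta\max\{x(v),1-x(v)\}^2)$, bound the expected number of neighbours requiring recursion by $(\Delta-1)\bigl(1-\exp(-\beta\max\{x(v),1-x(v)\}^2)\bigr)$, and invoke negative drift. Your write-up is somewhat more explicit in channelling the conclusion through the general running-time lemma, but the content is the same.

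One point to flag: your computation correctly gives the drift coefficient $c_2=(\Delta-1)(1-e^{-\beta})$, but you then write that ``the hypothesis $\gamma<1$ is precisely the requirement that $c_2<1$.'' As the lemma is stated, $\gamma=(\Delta-1)e^{-\beta}$, and this is \emph{not} the same condition: $\gamma<1$ holds for large $\beta$, whereas $c_2<1$ holds for small $\beta$, which is the regime the drift argument actually delivers (and the physically sensible one). This discrepancy is a typo in the lemma statement rather than an error in your reasoning---the preceding discussion in the paper and the condition advertised in the introduction both point to a small-$\beta$ criterion---but you should not equate the two quantities.
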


A similar analysis for the Clan of Ancestors approach requires $\Delta\exp(-\beta) < 1$, therefore the parameter
range with guaranteed performance here is slightly wider.

\subsection{Random Cluster}

Suppose that $q > 1$ in the random cluster model with density given by~\eqref{EQN:randomcluster}.
Then the density of a configuration gains a factor of $q$ for each connected component in the 
graph.

Viewed as a penalty, this means that for a particular edge, there is a penalty of $1/q$ if 
it connects two previously disconnected components in the rest of the graph.  In other words,
if $c = (1/q)^n$, where $n$ is the number of nodes in the graph, then the density can be written 
as 
\[
w_2(x) = c w(x) = (1/q)^{n - c(x))}.
\]

Now consider a single edge $\{i,j\}$ chosen from the underlying measure.  If the edge is labeled 0
(which happens with probability $1-p$), then the edge is accepted regardless of the rest
of the components.  If the edge is labeled 1 (which happens with probability $p$), then 
the edge is accepted with probability $1/q$ if the rest of the state does not connect
nodes $i$ and $j$, and accepted with probability 1 otherwise.

Suppose that $x(\{i,j\})$ is randomly chosen to be 1.
Then draw $U$ uniformly over $[0,1]$.  If $U < 1/q$, then the edge $\{i,j\}$
is accepted regardless of the rest of the state and the edge value becomes known.  Otherwise,
recursion must be used to determine enough of the rest of the state in order to determine if
the nodes $\{i,j\}$ are connected or not.

The first edge $\{i,j\}$ can be connected to as many as $2(\Delta - 1)$ different edges, each of which must be considered to determine if $x(\{i,j\}) = 1$.  Let $e$ be one of these edges (either
of the form $\{i,k\}$ or $\{j,k\}$ for some nodes $k$.)

If $e$ is chosen to be labeled 0 in the recursion, then it is removed from the ``needs a value'' list.  Otherwise, it is adjacent to at most $\Delta - 1$ new edges that might
need a value.  Hence the expected change of the size of the number of edges that need
values is bounded above by
\[
(-1)(1-p) + p(\Delta - 1) = \Delta p - 1.
\]
If this quantity is less than 0, then the number of edges to be considered drops (on average)
at each step, and the expected number of steps is bounded.  This gives the following lemma.

\begin{lemma}
  Let $G$ be a graph with $n$ nodes, $m$ edges, and maximum degree $\Delta$. 
  If $p < 1/\Delta$, then the expected number of random choices made in the algorithm is
  $O(m)$.
\end{lemma}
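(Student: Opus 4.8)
The plan is to establish the result by the same negative-drift argument used above for the general running-time lemma, applied to the edge-exploration frontier of the random cluster recursion. Since producing a full sample amounts to calling \prar{} with $S = D = E$, the general running-time bound $(1-c_1)^{-1}(1-c_2)^{-1}\#(S)$ will give $O(m)$ once I exhibit constants $c_1 < 1$ and $c_2 < 1$. To that end, let $N_t$ denote the number of edges still awaiting a value --- the ``needs a value'' frontier --- after $t$ recursive calls. Initially, when the top edge $\{i,j\}$ is drawn as $1$ with $U \ge 1/q$, the frontier consists of the edges incident to $i$ or $j$, of which there are at most $2(\Delta - 1)$, so $N_0 \le 2(\Delta - 1)$.

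Next I would compute the one-step drift of $N_t$. A recursive call pops one frontier edge $e$ and draws its $\berndist(p)$ label. With probability $1-p$ the label is $0$: the edge cannot help connect the endpoints, so it is removed and contributes nothing, a net change of $-1$. With probability $p$ the label is $1$: the edge joins the growing cluster and exposes a freshly reached endpoint whose degree is at most $\Delta$, so, discounting the edge just traversed, at most $\Delta - 1$ new edges enter the frontier. Hence
\[
\mean[N_{t+1} - N_t \mid N_t] \;\le\; (1-p)(-1) + p(\Delta - 1) \;=\; \Delta p - 1,
\]
which is strictly negative exactly when $p < 1/\Delta$. Taking $c_2 = \Delta p < 1$ makes $N_t$ a supermartingale with drift at most $-(1 - c_2)$, so the standard hitting-time estimate cited above (see~\cite{durrett2005}) bounds the expected number of pops needed to drive $N_t$ to $0$ by $N_0/(1 - c_2)$, a constant depending only on $\Delta$ and $p$. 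I would then fold in the restart branch as in the general lemma: a pass reaches the rejection line only when the top edge is $1$, its uniform exceeds $1/q$, and the recursion reveals the endpoints are \emph{not} otherwise connected, so this probability is bounded by some $c_1 < 1$; there are at most $(1-c_1)^{-1}$ expected passes, each of the constant expected length just derived. Summing the constant expected cost over the $m$ top-level edges yields the claimed $O(m)$.

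The step I expect to be the main obstacle is the drift estimate, specifically justifying the clean $\le \Delta - 1$ branching bound and the bounded increments that the martingale hitting-time result requires. The connectivity query is inherently global, so I must argue that the exploration can be organized as a tree-like search that only ever queues edges incident to a \emph{freshly} reached node; otherwise an edge could be revisited, the per-step increase could exceed $\Delta - 1$, and the supermartingale comparison would break down. Securing this ``fresh node'' structure --- essentially that the frontier never reprocesses an already-resolved edge --- is what simultaneously validates the drift inequality and keeps the increments bounded so that optional stopping applies; the remaining bookkeeping (the $c_1$ and $c_2$ constants) is then routine in light of the general running-time lemma.
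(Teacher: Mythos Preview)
Your proposal is correct and follows essentially the same route as the paper: both arguments compute the one-step drift of the ``needs a value'' frontier as $(1-p)(-1) + p(\Delta-1) = \Delta p - 1$, observe this is negative precisely when $p < 1/\Delta$, and conclude via the supermartingale hitting-time bound that each top-level edge costs $O(1)$ in expectation, giving $O(m)$ overall. Your write-up is in fact more careful than the paper's terse paragraph---you explicitly tie the argument to the general running-time lemma and flag the fresh-node bookkeeping needed to justify the $\Delta-1$ branching bound, a point the paper simply asserts.
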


\section{Connection to popping}
\label{SEC:popping}

This \prar{} protocol can be viewed as an extension of 
the cycle popping algorithm of Wilson for uniformly 
generating from rooted trees in a directed graph.

Consider a graph $(V,E)$, and construct
a directed graph by taking each edge $\{i,j\}$ and
adding directed arcs $(i,j)$ and $(j,i)$.
Designate a special node $r \in V$
known as the root of the graph.  
Then let a configuration consist
of a $\{0,1\}$ labeling of arcs such that 
every node $v \neq r$ has exactly one outgoing edge 
$(v,w)$ labeled 1.  

Let $\mu$ be the underlying
measure that is uniform over all such labelings.
Then it is easy to sample from $\mu$:  independently
for each node $i$ select a neighbor $j$ uniformly at
random and label that outgoing arc 1, and all other
outgoing arcs from $i$ label 0.

The density $w(x)$ then is 1 if 
every node has a directed path
using edges labeled 1 to the root, and is 0 otherwise.
When $w(x)$ is 1, say that $x$ encodes a 
{\em directed rooted tree} in the original graph.

Then Wilson~\cite{wilson1996} presented a simple
algorithm for generating from $w(x)$ over $\mu$.
Start with an arbitrary node $i$, and uniformly
choose a neighbor $j$.  From $j$ choose neighbor $k$,
and continue until one of two things happens.

If the directed path reaches a node already examined,
then there is a loop.  Erase the loop and continue
onwards.  For instance, if the choices where
$a \rightarrow b \rightarrow c \rightarrow d \rightarrow b$, then $(b,c,d,b)$ forms a loop, and the next step
would start with $a \rightarrow b$.
The other thing that can happen is that the directed
path reaches $r$.  In this case, fix the labeling for 
all nodes along the path.  

Now choose another node in the graph and repeat, choosing
random neighbors, erasing loops as they form, and stopping when reaching either $r$ or a previously fixed node.

This algorithm is just \prar{} with the backbone method applied to $\mu$ and $w$!  At each step of the algorithm
we are recursively moving deeper into the graph in order to determine if acceptance should occur.  If a loop forms, then rejection occurs, and all the edges in the loop are eliminated.  But if the path should encounter a previously fixed path, then acceptance occurs not only for the original edge but all along the path as well.

Wilson gave a proof tailored for his algorithm.  Because this is also a \prar{} algorithm, we immediately have correctness for the loop-erased random walk method of uniformly generating rooted trees.

%

\section*{Acknowledgements}

An earlier version of this technique appears in~\cite{huber2015b} (without the backbone method, connection to popping, and the general running time
bound given here.)  The recent development of this
method was supported by NSF grant DMS-1418495.

\bibliographystyle{plain}

\end{document}